\newtheorem{theorem}{Theorem}
\newcommand{\Emin}{{E}_{\textup{min}}}
\newcommand{\Emax}{{E}_{\textup{max}}}
\newcommand{\Eminh}{a}
\newcommand{\Emaxh}{b}
\newcommand{\KPM}{\textup{KPM}}
\renewcommand{\vec}{\bm}
\renewcommand{\d}{\mathrm{d}}
\def\@email#1#2{%
 \endgroup
 \patchcmd{\titleblock@produce}
  {\frontmatter@RRAPformat}
  {\frontmatter@RRAPformat{\produce@RRAP{*#1\href{mailto:#2}{#2}}}\frontmatter@RRAPformat}
  {}{}
}%
\begin{document}
\title{A spectrum adaptive kernel polynomial method}

\author{Tyler Chen}
\email{tyler.chen@nyu.edu}
\homepage{https://research.chen.pw}
\affiliation{New York University}
\altaffiliation{Department of Mathematics, Courant Institute of Mathematical Sciences, New York University, 251 Mercer Street, New York, NY 10012}%
\altaffiliation{Department of Computer Science and Engineering, Tandon School of Engineering, New York University, 370 Jay Street, New York, NY 11201}%

\date{\today}

\begin{abstract}
The kernel polynomial method (KPM) is a powerful numerical method for approximating spectral densities. Typical implementations of the KPM require an a prior estimate for an interval containing the support of the target spectral density, and while such estimates can be obtained by classical techniques, this incurs addition computational costs. We propose an spectrum adaptive KPM based on the Lanczos algorithm without reorthogonalization which allows the selection of KPM parameters to be deferred to after the expensive computation is finished. Theoretical results from numerical analysis are given to justify the suitability of the Lanczos algorithm for our approach, even in finite precision arithmetic. While conceptually simple, the paradigm of decoupling computation from approximation has a number of practical and pedagogical benefits which we highlight with numerical examples.
\end{abstract}

\maketitle

\section{Introduction}

Over the past several decades, a number of moment-based schemes have been developed to approximate spectral densities of matrices.\cite{skilling_89,jaklic_prelovsek_94,bai_fahey_golub_96,lin_saad_yang_16,jin_willsch_willsch_lagemann_michielsen_deraedt_21}
These methods typically access the matrix of interest through matrix-vector products and are therefore especially well suited for applications where matrix-vector products can be performed quickly or in which the matrix of interest is so large that exact diagonalization techniques are infeasible.
Perhaps the most prominent of these methods is the Kernel Polynomial Method (KPM),\cite{skilling_89,silver_roder_94,silver_roeder_voter_kress_96,weisse_wellein_alvermann_fehske_06,lin_saad_yang_16} which has found widespread use in quantum physics/chemistry \cite{skilling_89,silver_roder_94,silver_roeder_voter_kress_96,weisse_wellein_alvermann_fehske_06,ganeshan_pixley_dassarma_15,garca_covaci_rappoport_15,lin_saad_yang_16,carr_massatt_fang_cazeaux_luskin_kaxiras_17,carvalho_garciamartinez_lado_fernandezrossier_18,varjas_fruchart_akhmerov_perezpiskunow_20} and beyond. \cite{han_malioutov_avron_shin_17,dong_benson_bindel_19}

A standard implementation of the KPM\cite{weisse_wellein_alvermann_fehske_06} produces an approximate spectral density via a truncated Chebyshev polynomial expansion obtained using the low-degree Chebyshev moments of the target spectral density.
These moments are computed using a Chebyshev recurrence shifted and scaled to an interval of approximation containing the support of the target spectral density.
If this interval does not contain the support of the target spectral density, the KPM approximation is unlikely to converge, but if the interval is too large, then the KPM approximation will lose resolution.
All implementations of the KPM that we are aware of require this interval of approximation to be determined ahead of time.
Thus, as a pre-processing step, it is typical to run another algorithm to determine bounds for the support of the target spectral density.

The focus of this paper is an an implementation of the KPM which decouples computation from the choice of approximation method.
For instance, our implementation allows many different intervals of approximation to be tested out at essentially no cost once the main computation is completed.
In fact, and more importantly, approximations corresponding to different families of polynomials can also be efficiently obtained.
The choice of polynomial family can significantly impact the qualitative properties of the resulting KPM approximation, but the use of non-standard orthogonal polynomial families has been limited in practice thus far, arguably due to the previous lack of a simple implementation.

Our spectrum adaptive KPM is based on the Lanczos algorithm \emph{without reorthgonalization}.
It is well-known that the Lanczos algorithm is unstable, and this has lead to a general hesitance to use Lanczos-based methods for approximate spectral densities unless reorthogonalization is used.\cite{jaklic_prelovsek_94,silver_roeder_voter_kress_96,aichhorn_daghofer_evertz_vondelinden_03,weisse_wellein_alvermann_fehske_06,ubaru_chen_saad_17,granziol_wan_garipov_19}
Amazingly, this instability does not limit the usefulness of the Lanczos algorithm for many tasks. 
We use theoretical results from the numerical analysis literature to justify the validity of our approach in finite precision arithmetic.
We believe that this commentary will be of general value to the computational physics/chemistry communities.
Complimentary numerical experiments provide empirical evidence of the stability of the proposed algorithm.

Finally, we note that there are a number of related Lanczos-based algorithms such as the Finite Temperature Lanczos Method (FTLM);\cite{jaklic_prelovsek_94} see also Stochastic Lanczos Quadrature. \cite{bai_fahey_golub_96,ubaru_chen_saad_17}
These methods are also widely used in practice, and are viewed by some as preferable to KPM based methods in many settings.\cite{schnack_richter_steinigeweg_20,morita_tohyama_20,chen_trogdon_ubaru_22}
We do not advocate for the use of the KPM over these methods nor for the use of any of these methods over KPM.
Rather, our aim is to provide a new tool which allows practitioners to test out all of these algorithms for essentially free.
For instance, users no longer need to make an a priori decision to use FTLM or KPM; they can simply output both approximations and then decide which to use later.

\begin{figure*}[tb]
    \includegraphics[scale=.6]{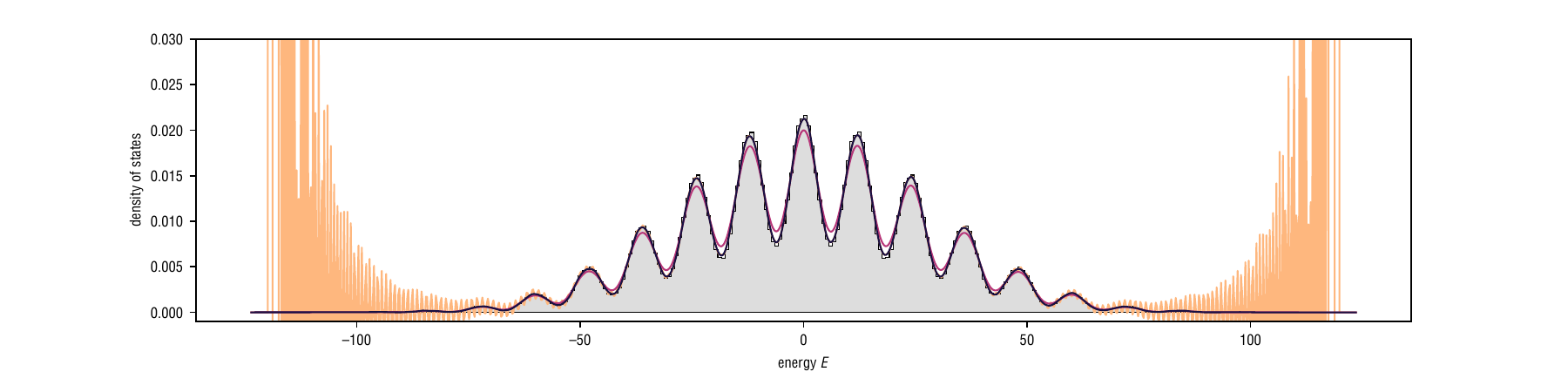}
    \caption{Illustration showing the impact of the support $[\Eminh,\Emaxh]$ for the standard Chebyshev KPM with $s=500$ moments and Jackson's damping kernel.
    Here $[\Eminh,\Emaxh] = [\Emin-\eta,\Emax+\eta]$ for varying choices of $\eta$.
    \emph{Legend}: 
    $\eta = 0$
    ({\protect\raisebox{0mm}{\protect\includegraphics[scale=.7]{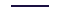}}}), 
    $\eta = -0.00007(\Emax-\Emin)$
    ({\protect\raisebox{0mm}{\protect\includegraphics[scale=.7]{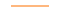}}}),
    $\eta = 0.5(\Emax-\Emin)$
    ({\protect\raisebox{0mm}{\protect\includegraphics[scale=.7]{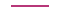}}}),
    histogram of true eigenvalues
    ({\protect\raisebox{0mm}{\protect\includegraphics[scale=.7]{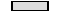}}}).
    \emph{Takeaway}:
    Observe that even a slight underestimate of $[\Emin,\Emax]$ results in a loss of convergence while an overestimate of $[\Emin,\Emax]$ results in a loss of resolution.
    Our spectrum adaptive KPM allows the reference density $\sigma(E)$ to be chosen after computation. 
    In fact, many different $\sigma(E)$ can be efficiently obtained and compared.
    }
    \label{fig:support}
\end{figure*}

\section{The kernel polynomial method}

We present the KPM from the perspective of orthogonal polynomials.
Throughout, $\vec{H} = \sum_{n=1}^{d} E_n |\vec{u}_i \rangle \langle \vec{u}_i|$ will be a Hermitian Hamiltonian of finite dimension $d<\infty$ with corresponding Density of States (DOS)
\begin{equation}
\rho(E) = \frac{1}{d} \sum_{n=0}^{d-1} \delta(E-E_n).
\end{equation}
Here $\delta(E)$ is a Dirac delta mass centered at zero.
Given a state $|\vec{r}\rangle$, the Local Density of State (LDOS)
\begin{equation}
\hat{\rho}(E) = 
\sum_{k=1}^{n-1} \langle \vec{r} | \vec{u}_n \rangle \delta(E-E_n)
\end{equation}
is also of interest in many settings.
For example, if $|\vec{r}\rangle$ is a random state drawn from the uniform distribution on the unit hypersphere, then $\hat{\rho}(E)$ is an unbiased estimator for $\rho(E)$.
In fact, \emph{quantum typicality} \cite{goldstein_lebowitz_mastrodonato_tumulka_zanghi_10,jin_willsch_willsch_lagemann_michielsen_deraedt_21} ensures $\hat{\rho}(E)$ concentrates around $\rho(E)$.
It therefore often suffices to use $\hat{\rho}(E)$ as a proxy for $\rho(E)$, but in the case that the variance of a single sample is too large, one can sample multiple random states independently and then average the corresponding LDOSs to reduce the variance. \cite{alben_blume_krakauer_schwartz_75,skilling_89,jaklic_prelovsek_94,weisse_wellein_alvermann_fehske_06,schnack_richter_heitmann_richter_steinigeweg_20}
In numerical analysis this is called stochastic trace estimation and has been analyzed in detail.\cite{girard_87,girard_89,hutchinson_89,avron_toledo_11,roostakhorasani_ascher_14,cortinovis_kressner_21}

The aim of the KPM is to produce a ``density'' $\rho_{\KPM}(E)$ approximating the LDOS\footnote{If we can compute the moments of the DOS $\rho(E)$, then we can apply the KPM to the DOS directly.} $\hat{\rho}(E)$.
Towards this end, let $\sigma(E)$ be a fixed \emph{reference density} and expand $\hat{\rho}(E)/\sigma(E)$ as a formal polynomial series
\begin{equation}
    \frac{\hat{\rho}(E)}{\sigma(E)} 
    = \sum_{n=0}^{\infty} \mu_n p_n(E),
\end{equation}
where $\{p_n\}$ are the orthonormal polynomials with respect to $\sigma(E)$.
Using the orthonormality of the $\{p_n\}$ with respect to $\sigma(E)$, we can compute $\mu_n$ by
\begin{align}
    \mu_n
    &= \int \sigma(E) \frac{\hat{\rho}(E)}{\sigma(E)} p_n(E) \d E
    \\&= \int \hat{\rho}(E) p_n(E) \d E
    \\&= \langle \vec{r} | p_n(\vec{H}) | \vec{r} \rangle. \label{eqn:moment}
\end{align}
Thus, we see the $\{\mu_n\}$ are the so-called (modified) moments of $\hat{\rho}(E)$ with respect to the orthogonal polynomials of $\sigma(E)$ and can be computed without explicit knowledge of $\hat{\rho}(E)$ using the expression in \cref{eqn:moment}.

Computing the first $s$ moments naturally gives an approximation $\rho_{\KPM}(E)$ to $\hat{\rho}(E)$ defined by 
\begin{equation}
    \label{eqn:kpm}
    \rho_{\KPM}(E)
    = \sigma(E) \sum_{n=0}^{s} \mu_n p_n(E).
\end{equation}
When the degree of the approixmation $s\to\infty$, 
\begin{equation}
    \rho_{\KPM}(E)
    \to
    \sigma(E) \sum_{n=0}^{\infty} \mu_n p_n(E)
    = \hat{\rho}(E),
\end{equation}
and convergence is expected to be at a rate $O(s^{-1})$ in the Wasserstein distance .\cite{weisse_wellein_alvermann_fehske_06,braverman_krishnan_musco_22,chen_trogdon_ubaru_22}
A rigorous theoretical understanding of other types of convergence is important, particularly if a density approximation is desired. 
However, this is not entirely straightforward as $\hat{\rho}(E)$ is a linear combination of Dirac deltas, and therefore isn't even a density itself.

\subsection{Damping}

Strictly speaking, $\rho_{\KPM}(E)$ need not be a proper density as it may be negative for some values of $E$. 
This effect is particularly noticeable if $\hat{\rho}(E)/\sigma(E)$ is very spiky so that polynomial approximations have large Gibbs oscillations.
To ensure positivity, it is often suggested to use so \emph{damping kernels} which effectively result in an approximation
\begin{equation}
    \rho_{\KPM}(E)
    = \sigma(E) \sum_{n=0}^{s} g_n \mu_n p_n(E), 
\end{equation}
where the damping coefficients $\{g_n\}$ are carefully chosen. 
The most common choice of coefficients correspond to the so-called Jackson's damping kernel; see \cite{weisse_wellein_alvermann_fehske_06} for a detailed discussion on damping.
It is also possible to simply apply a standard convolution against the resulting approximation, although this does not necessarily ensure positivity.

\subsection{Evaluating orthogonal polynomials}

Assuming $\sigma(E)$ is a unit-mass (positive) density, the orthogonal polynomials satisfy a symmetric three term recurrence
\begin{equation}
    \label{eqn:op_recurrence}
    p_{n+1}(E) = \frac{1}{\delta_n}\big(E p_{n}(E) - \gamma_n p_n(E) - \delta_{n-1} p_{n-1}(E) \big)
\end{equation}
with initial conditions $p_{1}(E) = (1/\delta_0)(E p_{0}(E) - \gamma_0 p_0(E) )$, $p_0(E) = 1$,
for some set of recurrence coefficients $\{\gamma_n, \delta_n\}$ depending on $\sigma(E)$.
Throughout, we will assume that these coefficients are known (or can be computed).
Then, once the moments are known, the KPM approximation $\rho_{\KPM}(E)$ defined in \cref{eqn:kpm} can be obtained by evaluating the polynomials by this recurrence and then forming a linear combination of these polynomials.

\subsection{Choice of reference density}

The most common choice of reference density is 
\begin{equation} 
    \label{eqn:cheb_density}
\sigma_{\Eminh,\Emaxh}^{T}(E) 
= \frac{1}{\pi}\frac{1}{\sqrt{(\Emaxh-E)(E-\Eminh)}}, 
\end{equation} 
which is the orthogonality weight for the Chebyshev polynomials shifted and scaled from $[-1,1]$ to $[\Eminh,\Emaxh]$.
For this choice of reference density, there is an elegant and widely used algorithm for computing the moments which we summarize in \cref{sec:cheb_moments}.
In \cref{fig:support} we illustrate some of the impacts of the choice of $[\Eminh,\Emaxh]$ on the KPM approximation when this reference density is used. 
In particular, is is very important that $[\Eminh,\Emaxh]$ contains the spectrum of $\vec{H}$.
Thus, it is often suggested to take $[\Eminh,\Emaxh] = [\Emin-\eta,\Emax+\eta]$, where $\eta>0$, to avoid the risk of $[\Eminh,\Emaxh]$ not containing the entire spectrum.\cite{weisse_wellein_alvermann_fehske_06}
However, using nonzero values of $\eta$ reduces the resolution of the approach .\cite{schluter_gayk_schmidt_honecker_schnack_21}

The choice of reference density $\sigma(E)$ impacts the qualitative features of the KPM approximation $\rho_{\KPM}(E)$, and in principle, any choice of unit-mass density with finite moments is possible .\cite{silver_roder_94,weisse_wellein_alvermann_fehske_06}
First, observe that the KPM approximation is exact, if $\hat{\rho}(E) / \sigma(E)$ is a polynomial of degree at most $s$. 
Thus, one perspective is that $\rho(E)$ should be chosen to try and make $\hat{\rho}(E) / \sigma(E)$ as easy to approximate with polynomials as possible.
In particular, the support of $\hat{\rho}(E)$ must contain the support of $\sigma(E)$. 
The difficulty of such an approach is that many properties of the spectrum of $\hat{\rho}(E)$ are not known ahead of time. 
Even the most basic properties such as an interval $[\Eminh,\Emaxh]$ containing the support of $\hat{\rho}(E)$ are often unknown a priori and must be approximated numerically as a pre-processing step.
The approach we describe in \cref{sec:adaptive} addresses these difficulties by allowing $\sigma(E)$ to be chosen after computation has completed.

In our experiments, we will make use of reference densities of the form
\begin{equation}
\label{eqn:reference_sum}
    \sigma(E) 
    = \sum_i w_i \sigma_{\Eminh_i,\Emaxh_i}^{T}(E) .
\end{equation}
The recurrence coefficients for the orthogonal polynomials of distributions like \cref{eqn:reference_sum} are easily computed, since integrals of polynomials against each term can be computed exactly using quadrature rules.\cite{saad_83,geronimo_vanassche_88}



\subsection{Computing Chebyshev moments}
\label{sec:cheb_moments}

In the case that $\sigma(E) = \sigma_{\Eminh,\Emaxh}^{T}(E)$ is the Chebyshev density \cref{eqn:cheb_density}, the modified moments can be computed efficiently using properties of Chebyshev recurrences.
Such an approach is described in detail in the literature,\cite{skilling_89,silver_roder_94,weisse_wellein_alvermann_fehske_06} but as this is by far the most common approach to implementing the KPM, we provide a brief overview to put the contrubtions of this paper into context.

Recall the Chebyshev polynomials are defined by the recurrence
\begin{equation}
\label{eqn:cheb}
    T_{n+1}(E) = 2x T_{n}(E) - T_{n-1}(E),
\end{equation}
with initial conditions $T_1(E) = E$ and $T_0(E) = 1$.
These polynomials are orthogonal with respect to the weight $1/\sqrt{1-E^2}$.
The Chebyshev polynomials also satisfy the useful identities
\begin{align}
    T_{2n}(E) &= 2T_n(E)^2 - 1 \label{eqn:cheb2n}
    \\
    T_{2n+1}(E) &= 2T_{n+1}(E)T_n(E) - T_1(E). \label{eqn:cheb2n1}
\end{align}

The Chebyshev polynomials shifted and scaled from $[-1,1]$ to $[\Eminh,\Emaxh]$ are defined by
\begin{equation}
\label{eqn:shifted_cheb}
\tilde{T}_n(E) = T_n((E-\alpha)/\beta),
\end{equation}
where
\begin{equation}
\alpha = (\Emaxh+\Eminh)/2
    ,\qquad 
    \beta = (\Emaxh-\Eminh)/2.
\end{equation}
It's straightforward to see that the orthonormal polynomials of $\sigma_{\Eminh,\Emaxh}^{T}(E)$ are
\begin{equation}
\label{eqn:on_cheb}
    p_n(E) = \sqrt{2-\delta_{0,n}} \tilde{T}_n(E),
\end{equation}
where $\delta_{0,n}$ is the Kronecker delta.

In order to compute the moments, one can run the matrix version of the Chebyshev recurrence \cref{eqn:shifted_cheb},
\begin{equation}
    |\vec{v}_{n+1}\rangle
    =
    (2/\beta) (\vec{H}-\alpha\vec{I}) | \vec{v}_{n} \rangle - |\vec{v}_{n-1} \rangle,
\end{equation}
with initial conditions $\quad|\vec{v}_1\rangle = (1/\beta)(\vec{H}-\alpha\vec{I})|\vec{v}_0\rangle$ and $|\vec{v}_0\rangle = |\vec{r}\rangle$.
Then, at step $n$, we have $|\vec{v}_n\rangle = \tilde{T}_n(\vec{H})|\vec{r}\rangle$.
Using \cref{eqn:cheb2n,eqn:cheb2n1} we see the moments can then be computed by $\mu_0 = 1$, $\mu_1 = \sqrt{2} \langle \vec{r}|\vec{v}_1 \rangle$, and
\begin{align}
    \mu_{2n} &= 2 \sqrt{2} \langle \vec{v}_n | \vec{v}_n \rangle - \sqrt{2}  \mu_0
    \\
    \mu_{2n+1} &=  2 \sqrt{2} \langle \vec{v}_{n+1} | \vec{v}_n \rangle - \mu_1.
\end{align}
Note the factors of $\sqrt{2}$ are due to the fact that we are working with the orthonormal Chebyshev polynomials \cref{eqn:on_cheb} rather than the typical Chebyshev polynomials.

This approach is summarized in \cref{alg:cheb-moments} and clearly requires $\Eminh$ and $\Emaxh$ to be specified ahead of time.
If $[\Eminh,\Emaxh]$ does not contain all of the energies of $\vec{H}$, then the algorithm is \emph{exponentially unstable}. 
On the other hand if $[\Eminh,\Emaxh]$ is much wider than the energies of $\vec{H}$, then the convergence may be slowed.

\begin{algorithm}[H]
\caption{Chebyshev moments}
\label{alg:cheb-moments}
\fontsize{10}{10}\selectfont
\begin{algorithmic}[1]
\Procedure{Cheb-moments}{$\vec{H}, |\vec{r}\rangle, k, \Eminh,\Emaxh$}
\State $\alpha = (\Emaxh-\Eminh)/2$, $\beta = (\Emaxh+\Eminh)/2$
\State $|\vec{v}_0\rangle = |\vec{r}\rangle$, $|\vec{v}_1\rangle = (1/\beta)(\vec{H}-\alpha\vec{I})|\vec{v}_0\rangle$
\State $\mu_0 = 1$, $\mu_1 =\sqrt{2} \langle \vec{v}_1 |\vec{v}_0\rangle$
\For {\( n=1,2,\ldots,k-1 \)}
    \State \( | \vec{v}_{n+1} \rangle = (2/\beta) (\vec{H}-\alpha\vec{I}) | \vec{v}_{n} \rangle - |\vec{v}_{n-1} \rangle \)
    \State $\mu_{2n} = 2 \sqrt{2} \langle \vec{v}_n | \vec{v}_n \rangle - \sqrt{2} \mu_0$
    \State $\mu_{2n+1} = 2 \sqrt{2} \langle \vec{v}_{n+1} | \vec{v}_n \rangle - \mu_1$
\EndFor
\State \Return $\mu_0, \mu_1, \ldots, \mu_{2k}$ 
\EndProcedure
\end{algorithmic}
\end{algorithm}

\section{A Lanczos-based spectrum adaptive KPM}
\label{sec:adaptive}
We now describe our proposed algorithm with allows $\sigma(E)$ (including it's support) to be chosen after the expensive aspects of the algorithm have been carried out.
This allows the KPM approximation generated to adapt to the energy spectrum of the Hamiltonian $\vec{H}$.
A related and very general approach to obtaining quadrature approximations from moment data has recently been described .\cite{chen_trogdon_ubaru_22}
The approach in this paper is more focused/straightforward, as we focus only on implementing the KPM using Lanczos.
Indeed, our approach can be summarized on one sentence: \emph{use the output of Lanczos to compute the KPM moments}.

\subsection{The Lanczos algorithm}

When run on $\vec{H}$ and $|\vec{r}\rangle$ for $k$ iterations, the Lanczos algorithm (\cref{alg:lanczos}) iteratively produces an orthonormal basis $\{ |\vec{v}_n\rangle \}$ for the Krylov subspace 
\begin{equation}
    \label{eqn:krylov}
    \operatorname{span}\{|\vec{r}\rangle, \vec{H}|\vec{r}\rangle, \ldots, \vec{H}^k|\vec{r}\rangle\}.
\end{equation}
This is done via a symmetric three-term recurrence
\begin{equation}
    \label{eqn:krylov_recurrence}
    |\vec{v}_{n+1} \rangle
    = \frac{1}{\beta_{n}} \big( \vec{H} |\vec{v}_{n} \rangle
    - \alpha_{n} |\vec{v}_{n} \rangle 
    - \beta_{n-1} |\vec{v}_{n-1} \rangle \big) 
\end{equation}
with initial conditions
$|\vec{v}_1 \rangle = (1/\beta_{0}) ( \vec{H} |\vec{v}_{0} \rangle
    - \alpha_{0} |\vec{v}_{0} \rangle )$ and $|\vec{v}_0\rangle = |\vec{r}\rangle$.
At each step $\alpha_{n}$ is chosen so that $\langle \vec{v}_{n+1}|\vec{v}_{n}\rangle = 0$ and then $\beta_{n}$ is chosen so that $\langle \vec{v}_{n+1}|\vec{v}_{n+1}\rangle=1$.
In exact arithmetic, $|\vec{v}_{n+1}\rangle$ is automatically orthogonal to $|\vec{v}_i\rangle$ for all $i\leq n-2$ by symmetry. 
However, those familiar with the Lanczos algorithm in finite precision arithmetic may be skeptical that we have omitted any form of reorthogonalization.
We discuss the stability of our approach in finite precision arithmetic in \cref{sec:stability}, and argue that reorthogonalization is not needed.

After $k$ iterations of the Lanczos iteration, the recurrence coefficients form a $(k+1)\times (k+1)$ symmetric tridiagonal matrix
\begin{equation}
\label{eqn:Hk}
    \vec{H}_k = \operatorname{tridiag}
    \left(\hspace{-.75em} \begin{array}{c}
        \begin{array}{cccc} \beta_0 & \beta_1 & \cdots & \beta_{k-1} \end{array} \\
            \begin{array}{ccccc} \alpha_0 & \alpha_1 & \cdots& \alpha_{k-1} & 0 \end{array} \\
        \begin{array}{cccc} \beta_0 & \beta_1 & \cdots & \beta_{k-1} \end{array} 
    \end{array} \hspace{-.75em}\right).
\end{equation}
If we write the Lanczos basis as $\vec{V} = \sum_{n=0}^{k} |\vec{v}_n\rangle\langle \vec{e}_n |$, where $|\vec{e}_n\rangle$ is the all zeros vector with a one in index $n$, 
it is not hard to see \cref{eqn:krylov_recurrence} implies
\begin{equation}
    \label{eqn:krylov_brecurrence}
    \vec{H}\vec{V} = \vec{V} \vec{H}_k + |\vec{v}\rangle\langle \vec{e}_k|
\end{equation}
for some vector $|\vec{v}\rangle$. 
Note that it is somewhat more common to write such a recurrence with the upper-left $k\times k$ principle submatrix of $\vec{H}_k$ which is closely related to Gaussian quadrature .\cite{golub_meurant_09}
However, as will become apparent in the next section, using $\vec{H}_k$ as defined in \cref{eqn:Hk} will provide slightly more approximation power in our Lanczos-based KPM.

\begin{algorithm}[H]
\caption{Lanczos}
\label{alg:lanczos}
\fontsize{10}{10}\selectfont
\begin{algorithmic}[1]
\Procedure{Lanczos}{$\vec{H}, |\vec{r}\rangle, k, $}
\State $|\vec{v}_0 \rangle = |\vec{r}\rangle$
\State $|\tilde{\vec{v}}_1\rangle = \vec{H} |\vec{v}_0 \rangle$
\State $|\hat{\vec{v}}_1 = |\tilde{\vec{v}}_1\rangle - \alpha_0 |\vec{v}_0 \rangle$,~~$\alpha_0 = \langle \vec{v}_0| \tilde{\vec{v}}_1\rangle$
\State $|\vec{v}_1\rangle = \hat{\vec{v}}_1 / \beta_0$,~~ $\beta_0 = \langle \hat{\vec{v}}_1|\hat{\vec{v}}_1\rangle$
\For {\( n=1,2,\ldots,k-1 \)}
    \State $|\tilde{\vec{v}}_{n+1}\rangle = \vec{H} |\vec{v}_n \rangle - \beta_{n-1} | \vec{v}_{n-1} \rangle$
    \State $\alpha_n = \langle \vec{v}_{n}| \tilde{\vec{v}}_n\rangle$
    \State $|\hat{\vec{v}}_{n+1}\rangle = |\tilde{\vec{v}}_{n+1} \rangle - \alpha_n |\vec{v}_n \rangle$
    \State $\beta_{n} = \langle \hat{\vec{v}}_{n+1}|\hat{\vec{v}}_{n+1}\rangle$ \label{line:normalize}
    \State $|\vec{v}_{n+1}\rangle = |\hat{\vec{v}}_{n+1}\rangle /\beta_n$
\EndFor
\State \Return $\alpha_0, \alpha_1, \ldots, \alpha_{k-1}$, $\beta_0, \beta_1, \ldots, \beta_{k-1}$ 
\EndProcedure
\end{algorithmic}
\end{algorithm}

\subsection{Getting the KPM moments}
It is well-known that the Lanczos tridiagonal matrix $\vec{H}_k$ contains information suitable for computing the polynomial moments of $\hat{\rho}(E)$ through degree $2k$.
\begin{theorem}\label{thm:poly_exact}
Let $p$ be any polynomial of degree at most $2k$. 
Then
    \begin{equation}
        \langle \vec{r} | p(\vec{H}) | \vec{r}\rangle 
        = \int \hat{\rho}(E) p(E) \d E
        = \langle \vec{e}_0 | p(\vec{H}_k) | \vec{e}_0 \rangle.
    \end{equation}
\end{theorem}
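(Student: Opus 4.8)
The plan is to treat the two equalities in turn. The first, $\langle \vec{r} | p(\vec{H}) | \vec{r}\rangle = \int \hat{\rho}(E) p(E) \,\d E$, is just a restatement of \cref{eqn:moment}: substituting the spectral decomposition $\vec{H} = \sum_n E_n |\vec{u}_n\rangle\langle \vec{u}_n|$ gives $p(\vec{H}) = \sum_n p(E_n) |\vec{u}_n\rangle\langle \vec{u}_n|$, and taking the $|\vec{r}\rangle$--$|\vec{r}\rangle$ matrix element produces $\sum_n p(E_n) |\langle \vec{r}|\vec{u}_n\rangle|^2$, which is exactly $\int \hat{\rho}(E) p(E)\,\d E$. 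This holds for arbitrary $p$, with no degree restriction; the degree restriction only matters for the second equality.

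For the second equality the key lemma I would prove first is that the Krylov vectors are reproduced by powers of $\vec{H}_k$:
\begin{equation}
\vec{H}^j |\vec{r}\rangle = \vec{V}\vec{H}_k^j |\vec{e}_0\rangle, \qquad 0 \le j \le k .
\end{equation}
I would argue this by induction on $j$ using the relation $\vec{H}\vec{V} = \vec{V}\vec{H}_k + |\vec{v}\rangle\langle\vec{e}_k|$ from \cref{eqn:krylov_brecurrence}: the base case is $|\vec{r}\rangle = |\vec{v}_0\rangle = \vec{V}|\vec{e}_0\rangle$, and if the claim holds for some $j \le k-1$ then $\vec{H}^{j+1}|\vec{r}\rangle = \vec{H}\vec{V}\vec{H}_k^j|\vec{e}_0\rangle = \vec{V}\vec{H}_k^{j+1}|\vec{e}_0\rangle + |\vec{v}\rangle\,\langle\vec{e}_k|\vec{H}_k^j|\vec{e}_0\rangle$, and the last term vanishes because $\vec{H}_k$ is tridiagonal, so $\vec{H}_k^j|\vec{e}_0\rangle$ is supported on indices $0,\dots,j$, which excludes index $k$ as long as $j < k$. (Note that it is exactly here that the extra row and column of $\vec{H}_k$ in \cref{eqn:Hk} buy two additional degrees compared with the usual $k\times k$ submatrix.) Since the columns of $\vec{V}$ are orthonormal, $\vec{V}^\dagger\vec{V} = \vec{I}$, so the lemma also gives $\vec{H}_k^j|\vec{e}_0\rangle = \vec{V}^\dagger\vec{H}^j|\vec{r}\rangle$ for $0 \le j \le k$.

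To reach degree $2k$ I would split a monomial $E^j$ with $j \le 2k$ as $E^a E^b$ with $a,b \le k$ and compute
\begin{align}
\langle \vec{e}_0 | \vec{H}_k^{j} | \vec{e}_0 \rangle
&= \langle \vec{H}_k^a\vec{e}_0 \,|\, \vec{H}_k^b\vec{e}_0 \rangle
= \langle \vec{V}^\dagger\vec{H}^a\vec{r} \,|\, \vec{V}^\dagger\vec{H}^b\vec{r} \rangle \\
&= \langle \vec{H}^a\vec{r} \,|\, \vec{V}\vec{V}^\dagger \,|\, \vec{H}^b\vec{r} \rangle
= \langle \vec{H}^a\vec{r} \,|\, \vec{H}^b\vec{r} \rangle
= \langle \vec{r} | \vec{H}^{j} | \vec{r} \rangle ,
\end{align}
where the symmetry of $\vec{H}_k$ is used in the first step, the lemma in the second, and in the fourth step the fact that $\vec{V}\vec{V}^\dagger$ is the orthogonal projector onto the Krylov subspace \cref{eqn:krylov} together with $\vec{H}^b|\vec{r}\rangle$ lying in that subspace (since $b \le k$). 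Summing over monomials weighted by the coefficients of $p$ gives $\langle\vec{e}_0|p(\vec{H}_k)|\vec{e}_0\rangle = \langle\vec{r}|p(\vec{H})|\vec{r}\rangle$ for every polynomial $p$ of degree at most $2k$, which together with the first part proves the theorem.

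The only real obstacle is bookkeeping rather than conceptual: making sure the residual term $|\vec{v}\rangle\langle\vec{e}_k|$ genuinely does not contribute for exponents up to $k$ (so that $2k$, and not $2k-2$, is the correct exactness degree for $\vec{H}_k$ as defined in \cref{eqn:Hk}), and keeping straight the two distinct uses of orthonormality of $\vec{V}$ — once as $\vec{V}^\dagger\vec{V}=\vec{I}$ and once as $\vec{V}\vec{V}^\dagger$ acting as the identity on Krylov vectors. I would also flag the standing assumption that Lanczos does not break down before step $k$ (all $\beta_n \neq 0$), so that $\operatorname{range}(\vec{V})$ is precisely the Krylov subspace; if a breakdown occurs the Krylov subspace is $\vec{H}$-invariant and the identity is immediate by a separate short argument.
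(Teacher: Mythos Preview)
Your proof is correct and follows essentially the same approach as the paper: prove by induction (using \cref{eqn:krylov_brecurrence} and the tridiagonal structure of $\vec{H}_k$) that $\vec{H}^j|\vec{r}\rangle = \vec{V}\vec{H}_k^j|\vec{e}_0\rangle$ for $j\le k$, then split $j=a+b$ with $a,b\le k$ and use orthonormality of $\vec{V}$ to pass to degree $2k$. The paper compresses the final step into one line using only $\vec{V}^\dagger\vec{V}=\vec{I}$ (i.e.\ $\langle \vec{H}^a\vec{r}\,|\,\vec{H}^b\vec{r}\rangle = \langle \vec{V}\vec{H}_k^a\vec{e}_0\,|\,\vec{V}\vec{H}_k^b\vec{e}_0\rangle = \langle \vec{e}_0|\vec{H}_k^{a+b}|\vec{e}_0\rangle$), whereas your route via $\vec{V}\vec{V}^\dagger$ as the Krylov projector is a slightly longer but equally valid bookkeeping of the same idea.
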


\begin{proof}
    The first equality is by definition of the LDOS $\hat{\rho}(E)$.
    Due to linearity of polynomials, it suffices to show 
    \begin{equation}
    \label{eqn:func_exact}
        \langle \vec{r} | \vec{H}^n | \vec{r} \rangle = \langle \vec{e}_0 | \vec{H}_k^n | \vec{e}_0 \rangle
    \end{equation}
    for $n=0,1,\ldots, 2k$.
    We will first show that $\vec{H}^n |\vec{r}\rangle = \vec{V} \vec{H}_k^n |\vec{e}_0\rangle$ for all $n\leq k$.
    Since $\vec{V}^\dagger\vec{V} = \vec{I}$, this immediately implies the desired result.

    Suppose $\vec{H}^{n-1} |\vec{r}\rangle = \vec{V} \vec{H}_k^{n-1} |\vec{e}_0\rangle$.
    Then, since $|\vec{r}\rangle = \vec{V}|\vec{e}_0\rangle$, we can use \cref{eqn:krylov_brecurrence} to write
    \begin{equation}
        \vec{H}^n |\vec{r}\rangle
        = \vec{H}\vec{V} \vec{H}_k^{n-1} |\vec{e}_0\rangle
        = \vec{V} \vec{H}_k^n |\vec{e}_0\rangle + |\vec{v}\rangle \langle \vec{e}_k|\vec{H}_k^n|\vec{e}_0\rangle.
    \end{equation}
    Since $\vec{H}_k$ is tridiagonal, $\vec{H}_k^n$ has bandwidth $2n+1$ and $        \langle \vec{e}_k|\vec{H}_k^n|\vec{e}_0\rangle = 0$
    provided $n \leq k$.
    The base case $|\vec{r}\rangle = \vec{V} |\vec{e}_0\rangle$ is trivial.
\end{proof}

The critical observation is that this allows us to obtained the KPM moments $\{\mu_n\}$ with respect to a reference density $\sigma(E)$ which we can choose \emph{after} we have run the Lanczos computation.
In fact, we can cheaply produce approximations corresponding to various different reference measures as this process no longer involves computations with $\vec{H}$ or any vectors of length $d$.

If we choose $\sigma(E) = \sigma_{\Eminh,\Emaxh}^{T}(E)$, we can compute the moments by applying \cref{alg:cheb-moments} to $\vec{H}_k$ and $|\vec{e}_0\rangle$. 
Of note is the fact that the Lanczos algorithm produces high accuracy estimates of extremal eigenvalues when started on a random vector.\cite{kuczyski_wozniakowski_92,zhou_li_11,martinsson_tropp_20}
In particular, the largest and smallest eigenvalues of the top-left $k\times k$ sub-matrix of $\vec{H}_k$ approximate those of $\vec{H}$ from the interior.
We can use this to help guide our choice of $\Eminh$ and $\Emaxh$.

For other choices of $\sigma(E)$, we can compute the moments directly via the three term-recurrence for the orthogonal polynomials. 
In particular, we use the matrix version of \cref{eqn:op_recurrence} 
\begin{equation}
    |\vec{u}_{n+1} \rangle
    = \frac{1}{\delta_{n}} \big( \vec{H}_k |\vec{u}_{n} \rangle
    - \gamma_{n} |\vec{u}_{n} \rangle 
    - \delta_{n-1} |\vec{u}_{n-1} \rangle \big) 
\end{equation}
with initial conditions $|\vec{u}_{1} \rangle = (1/\delta_{0}) ( \vec{H}_k |\vec{u}_{0} \rangle - \gamma_{n} |\vec{u}_{0} \rangle )$ and $|\vec{u}_0\rangle = |\vec{e}_0\rangle$.
Then $|\vec{u}_{n}\rangle = | \vec{p}_{n}(\vec{H}_k) |\vec{e}_0 \rangle$, so we can compute $\mu_n = \langle \vec{e}_0 | \vec{u}_n \rangle$ for $n\leq 2k$.
This is summarized in \cref{alg:lanczos_moments}.
The cost of this process depends only on $k$ and not on $d$, so tricks for halving the number of matrix-vector products with $\vec{H}_k$ are not needed.

\begin{algorithm}[H]
\caption{Get KPM moments from Lanczos}
\label{alg:lanczos_moments}
\fontsize{10}{10}\selectfont
\begin{algorithmic}[1]
\Procedure{moments-from-Lanczos}{$\vec{H}_k, \sigma(E)$}
\State Obtain recurrence coefficients $\{\gamma_n,\delta_n\}$ for the orthogonal polynomials of $\sigma(E)$
\State $|\vec{u}_0 \rangle = |\vec{e}_0\rangle$, $|\vec{u}_1 \rangle = (1/\delta_0) (\vec{H}_k|\vec{u}_0\rangle - \gamma_0 \vec{u}_0\rangle) $
\State $\mu_0 = \langle \vec{e}_0 | \vec{u}_0 \rangle$,  $\mu_1 = \langle \vec{e}_0 | \vec{u}_1 \rangle$
\For {\( n=1,2,\ldots,2k-1 \)}
    \State $|\vec{u}_{n+1} \rangle
    = (1/\delta_{n}) ( \vec{H}_k |\vec{u}_{n} \rangle
    - \gamma_{n} |\vec{u}_{n} \rangle 
    - \delta_{n-1} |\vec{u}_{n-1} \rangle )$
    \State $\mu_{n+1} = \langle \vec{e}_0 | \vec{u}_{n+1} \rangle$
\EndFor
\State \Return $\mu_0, \mu_1, \ldots, \mu_{2k}$ 
\EndProcedure
\end{algorithmic}
\end{algorithm}

\subsection{Stability of the Lanczos algorithm}
\label{sec:stability}

The Lanczos algorithm is unstable in the sense that the tridiagonal matrix $\vec{H}_k$ and basis vectors $\{|\vec{v}_n\rangle\}$ produced in finite precision arithmetic may be nothing like what would be obtained in exact arithmetic.
In particular, by symmetry, $|\vec{v}_{n+1}\rangle$ is automatically orthogonal to $|\vec{v}_0\rangle, \ldots, |\vec{v}_{n-1}\rangle$ in exact arithmetic.
However, in finite precision arithmetic, this is not longer even approximately true and the Lanczos basis vectors produced can completely lose orthogonality and even linear independence.
To fix this, it is common to use \emph{reorthogonalization}; that is, to explicitly orthogonalize $|\hat{\vec{v}}_{n+1}$ against  $|\vec{v}_0\rangle, \ldots, |\vec{v}_{n-1}\rangle$ before normalizing in \cref{line:normalize} of \cref{alg:lanczos}.
This of course drastically increases the computational requirements to be able to run the algorithm; in particular, the memory required for reorthogonalization scales as $O(dk)$ and the arithmetic cost as $O(dk^2)$.

Despite the instabilities of the Lanczos algorithm without reorthogonalization, there is significant theoretical \cite{paige_70,paige_76,paige_80,greenbaum_89,strakos_greenbaum_92,druskin_knizhnerman_91} and empirical \cite{long_prelovsek_sawish_karadamoglou_03,schnack_richter_steinigeweg_20,chen_trogdon_ubaru_21} evidence that the Lanczos algorithm is highly effective for tasks related to density of states approximation.
In fact, while not widely known, the Lanczos algorithm is \emph{forward stable} for the tasks of computing Chebyshev polynomials and moments .\cite{druskin_knizhnerman_91,knizhnerman_96}
We summarize the high-level ideas behind these works, the results of which we believe are relevant to the computational physics and chemistry communities.

In finite precision arithmetic, the outputs of the Lanczos algorithm satisfy a perturbed version of \cref{eqn:krylov_brecurrence},
\begin{equation}
    \label{eqn:krylov_brecurrence_fp}
    \vec{H}\vec{V} = \vec{V} \vec{H}_k + |\vec{v}\rangle\langle \vec{e}_k|
    + \vec{F}.
\end{equation}
Here, $\vec{F}$ accounts for local errors which can be expected to be on the size of machine precision.
In addition, while $\vec{V}$ need not be orthonormal, $|1-\langle \vec{v}_{n+1}|\vec{v}_{n+1}\rangle|$ and $|\langle \vec{v}_{n+1}|\vec{v}_n\rangle|$ are on the order of machine precision. 
This (and much more) is analyzed in detail in .\cite{paige_70,paige_76,paige_80}
These assumptions form the basis of essentially all analyses of the behavior of Lanczos-based methods in finite precision arithmetic.

Following the proof of \cref{thm:poly_exact}, we might try to use \cref{eqn:krylov_brecurrence_fp} to understand the difference between $\vec{H}^n |\vec{r}\rangle$ and $\vec{V} \vec{H}_k^n|\vec{e}_0\rangle$; i.e. how closely \cref{eqn:func_exact} holds in finite precision arithmetic.
However, it's not hard to see that this results in an error term with an exponential dependence on $k$.
This is fundamentally because the monomial basis is very poorly conditioned and therfore not a good choice to work with numerically.
Indeed, if instead we use a Chebyshev basis, then it can be shown the error term itself satisfies Chebyshev-like three-term recurrence and grows only polynomially with $k$.
This yields the following bound:
\begin{theorem}[{Druskin Knizhnerman 1991 \cite{druskin_knizhnerman_91}, informal}]\label{thm:dk91}
Suppose $[\Eminh,\Emaxh] = [\Emin - \eta,\Emax + \eta]$ for some $\eta = O(\epsilon_{\textup{mach}} \operatorname{poly}(k))$.
    Let $\vec{H}_k$ be the output of the Lanczos algorithm run on $\vec{H}$ and $|\vec{r}\rangle$ for $k$ iterations in finite precision arithmetic without reorthogonalization. 
    Then, for any $n\leq 2k$, 
    \begin{equation}
        \big\| \tilde{T}_n(\vec{H}) | \vec{r} \rangle
        - \vec{V} \tilde{T}_n(\vec{H}_k) | \vec{e}_0 \rangle \big\|
        = O\big(\epsilon_{\textup{mach}} \operatorname{poly}(k)\big).
\end{equation}
\end{theorem}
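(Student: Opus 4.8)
The plan is to mimic the inductive argument of Theorem \ref{thm:poly_exact}, but to track the accumulated local error through the Chebyshev three-term recurrence rather than the monomial powers. Define the error vector $|\vec{f}_n\rangle := \tilde{T}_n(\vec{H}) |\vec{r}\rangle - \vec{V} \tilde{T}_n(\vec{H}_k) |\vec{e}_0\rangle$. I would first write down what $\vec{H}\vec{V}\tilde{T}_n(\vec{H}_k)|\vec{e}_0\rangle$ equals, using the finite-precision recurrence \cref{eqn:krylov_brecurrence_fp}: it produces $\vec{V}\vec{H}_k\tilde{T}_n(\vec{H}_k)|\vec{e}_0\rangle$ plus the boundary term $|\vec{v}\rangle\langle\vec{e}_k|\tilde{T}_n(\vec{H}_k)|\vec{e}_0\rangle$ plus $\vec{F}\tilde{T}_n(\vec{H}_k)|\vec{e}_0\rangle$. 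The key structural point is that the shifted Chebyshev recurrence $\tilde{T}_{n+1}(E) = (2/\beta)(E-\alpha)\tilde{T}_n(E) - \tilde{T}_{n-1}(E)$ applied at the matrix level to both $\tilde{T}_n(\vec{H})|\vec{r}\rangle$ and $\vec{V}\tilde{T}_n(\vec{H}_k)|\vec{e}_0\rangle$ shows that $|\vec{f}_{n+1}\rangle$ satisfies a perturbed Chebyshev recurrence driven by $|\vec{f}_n\rangle$, $|\vec{f}_{n-1}\rangle$, and a forcing term of size $O(\epsilon_{\textup{mach}}\operatorname{poly}(k))$ coming from $\vec{F}$, from the non-orthonormality of $\vec{V}$, and from the floating-point evaluation of the recurrence itself.

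The second step is to bound the boundary term $\langle\vec{e}_k|\tilde{T}_n(\vec{H}_k)|\vec{e}_0\rangle$. In exact arithmetic this vanishes for $n \le 2k$ because $\vec{H}_k$ is tridiagonal and $\tilde{T}_n(\vec{H}_k)$ has bandwidth $n$; I would check that this exact-arithmetic vanishing still holds here (the recurrence structure of $\vec{H}_k$ is combinatorial, not affected by rounding in forming $\vec{H}_k$), so for $n \le 2k$ the boundary term does not contribute and the error is governed purely by the propagated rounding errors. This is the reason the hypothesis caps $n$ at $2k$, and the reason $\vec{H}_k$ is taken as the full $(k+1)\times(k+1)$ matrix rather than its $k\times k$ principal submatrix.

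The third step is to solve the scalar recurrence that bounds $\|\,|\vec{f}_n\rangle\|$. Writing $\phi_n$ for an upper bound on $\|\,|\vec{f}_n\rangle\|$, one gets $\phi_{n+1} \le \frac{2}{\beta}\|\vec{H}-\alpha\vec{I}\|\,\phi_n + \phi_{n-1} + c\,\epsilon_{\textup{mach}}\operatorname{poly}(k)$. Under the hypothesis $[\Eminh,\Emaxh]=[\Emin-\eta,\Emax+\eta]$ with $\eta = O(\epsilon_{\textup{mach}}\operatorname{poly}(k))$, the spectrum of $\vec{H}$ is contained in (essentially) $[\Eminh,\Emaxh]$ up to an $O(\eta)$ amount, so $\frac{2}{\beta}(E-\alpha)$ maps the eigenvalues of $\vec{H}$ into $[-1-O(\eta/\beta),1+O(\eta/\beta)]$; hence $\tilde{T}_n(\vec{H})$ has norm growing at most like $\operatorname{poly}(n)$ (not exponentially), and the homogeneous solutions of the recurrence grow only polynomially. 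Summing the inhomogeneous term against these polynomially-growing kernels over $n \le 2k$ steps yields $\phi_n = O(\epsilon_{\textup{mach}}\operatorname{poly}(k))$. Finally, since $\tilde T_n(\vec H_k)$ likewise stays polynomially bounded and $\|\vec{V}\| = 1 + O(\epsilon_{\textup{mach}})$, the claimed bound on $\|\tilde{T}_n(\vec{H})|\vec{r}\rangle - \vec{V}\tilde{T}_n(\vec{H}_k)|\vec{e}_0\rangle\|$ follows.

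The main obstacle is the third step: making the passage from the vector recurrence for $|\vec{f}_n\rangle$ to a clean scalar bound genuinely polynomial in $k$. One must be careful that the amplification factor $\frac{2}{\beta}\|\vec{H}-\alpha\vec{I}\|$ is controlled to be at most $1 + O(\eta/\beta)$ — this is exactly where the assumption that $[\Eminh,\Emaxh]$ contains the spectrum (up to the tiny slack $\eta$) is essential; without it the factor exceeds $1$ by a constant, the homogeneous solutions grow like $(1+\delta)^n$, and one is back to an exponential-in-$k$ bound. Controlling the Chebyshev polynomials of $\vec{H}_k$ is subtler, since a priori the eigenvalues of $\vec{H}_k$ need not lie in $[\Eminh,\Emaxh]$; here one invokes the fact (from Paige's analysis) that in finite precision the eigenvalues of $\vec{H}_k$ still lie within $O(\epsilon_{\textup{mach}}\operatorname{poly}(k))$ of the spectrum of $\vec{H}$, so $\tilde T_n(\vec H_k)$ is also only polynomially large. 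Assembling these pieces with explicit (if unoptimized) polynomial degrees is the technical heart of \cite{druskin_knizhnerman_91}; for the informal statement here it suffices to carry the argument through at the level of "polynomially bounded" without chasing the exact exponents.
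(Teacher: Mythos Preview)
The paper does not prove this theorem; it cites Druskin--Knizhnerman and offers only a one-sentence sketch: work in the Chebyshev basis so that the error term itself satisfies a Chebyshev-like three-term recurrence and hence grows polynomially rather than exponentially. Your overall strategy---derive a perturbed Chebyshev recurrence for $|\vec{f}_n\rangle$ from \cref{eqn:krylov_brecurrence_fp} and propagate the forcing through polynomially-bounded Chebyshev kernels---is exactly that sketch, so the approach is the right one.

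There is, however, a genuine gap in your second step. The boundary term $\langle\vec{e}_k|\tilde{T}_n(\vec{H}_k)|\vec{e}_0\rangle$ vanishes only for $n<k$, not for $n\le 2k$: $\vec{H}_k$ is $(k{+}1)\times(k{+}1)$, the $(k,0)$ entry sits at distance $k$ from the diagonal, and $\tilde{T}_n(\vec{H}_k)$ has bandwidth $n$. For $n\ge k$ the boundary term is generically $O(1)$, and indeed in exact arithmetic $\tilde{T}_n(\vec{H})|\vec{r}\rangle\neq\vec{V}\tilde{T}_n(\vec{H}_k)|\vec{e}_0\rangle$ once $n>k$ (the left-hand side lives in an $(n{+}1)$-dimensional Krylov space, the right-hand side in the range of $\vec{V}$). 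So the range ``$n\le 2k$'' in the informal statement cannot be obtained by the bandwidth argument you give; the vector bound in Druskin--Knizhnerman is really for degrees up to the number of Lanczos steps. The $2k$ range is correct for the \emph{bilinear form} $\langle\vec{r}|\tilde{T}_n(\vec{H})|\vec{r}\rangle$ (that is \cref{thm:kn96}), but there the mechanism is different: one factors the degree-$2k$ polynomial into two degree-$k$ pieces and uses approximate orthonormality of $\vec{V}$, not a bandwidth count.

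A smaller issue in your third step: the scalar inequality $\phi_{n+1}\le 2\phi_n+\phi_{n-1}+c\epsilon$ that results from taking norms has exponentially growing homogeneous solutions (characteristic roots $1\pm\sqrt{2}$), so passing to a scalar bound too early forfeits the whole point. You implicitly recognize this when you then appeal to the matrix Chebyshev kernels $\tilde{T}_n(\vec{H})$ and $\tilde{U}_n(\vec{H})$ being polynomially bounded; the argument must stay at the vector level (variation of parameters for the operator recurrence) rather than the scalar one.
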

Here $\tilde{T}_n$ is as in \cref{eqn:shifted_cheb} and the big-$O$ hides mild dimensional constants and dependencies on $\Eminh$ and $\Emaxh$.
The actual statement of \cite[Theorem 1]{druskin_knizhnerman_91} is more precise, and gives explicit bounds for the the degree of the $\operatorname{poly}(k)$ terms. 
We remark that in numerical analysis, the precise numerical value of bounds is often much less important than the intuition conveyed by the bound.

With some additional knowledge of properties satisfied by $\vec{V}^\dagger \vec{V}$ in finite precision arithmetic (which can be very far from the identity), a similar result can be shown for the Chebyshev moments:
\begin{theorem}[{Knizhnerman 1996 \cite{knizhnerman_96}, informal}]\label{thm:kn96}
    Under similar assumptions to \cref{thm:dk91}, for any $n\leq 2k$, 
    \begin{equation}
        \big| \langle \vec{r} | \tilde{T}_n(\vec{H}) | \vec{r} \rangle
        - \langle \vec{e}_0 | \tilde{T}_n(\vec{H}_k) | \vec{e}_0 \rangle \big|
        = O\big(\epsilon_{\textup{mach}} \operatorname{poly}(k)\big).
\end{equation}
\end{theorem}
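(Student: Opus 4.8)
The plan, following \cite{knizhnerman_96}, is to bootstrap the vector-level estimate of \cref{thm:dk91} to the scalar moment. Since $|\vec{v}_0\rangle = |\vec{r}\rangle$ holds exactly we have $|\vec{r}\rangle = \vec{V}|\vec{e}_0\rangle$, and \cref{thm:dk91} supplies a vector $|\vec{g}_n\rangle$ of norm $O(\epsilon_{\textup{mach}}\operatorname{poly}(k))$ with $\tilde{T}_n(\vec{H})|\vec{r}\rangle = \vec{V}\tilde{T}_n(\vec{H}_k)|\vec{e}_0\rangle + |\vec{g}_n\rangle$ for all $n\le 2k$. Taking the inner product with $|\vec{r}\rangle = |\vec{v}_0\rangle$ gives
\begin{equation}
\langle \vec{r} | \tilde{T}_n(\vec{H}) | \vec{r}\rangle
= \langle \vec{e}_0 | \vec{V}^\dagger\vec{V}\, \tilde{T}_n(\vec{H}_k) | \vec{e}_0\rangle + \langle \vec{v}_0 | \vec{g}_n\rangle,
\end{equation}
and since $\| \, |\vec{v}_0\rangle \, \| = \| \, |\vec{r}\rangle \, \| = 1$ the last term is $O(\epsilon_{\textup{mach}}\operatorname{poly}(k))$. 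Hence it suffices to show
\begin{equation}
\label{eqn:kn_crux}
\big| \langle \vec{e}_0 | (\vec{V}^\dagger\vec{V} - \vec{I})\, \tilde{T}_n(\vec{H}_k) | \vec{e}_0\rangle \big| = O\big(\epsilon_{\textup{mach}}\operatorname{poly}(k)\big), \qquad n\le 2k.
\end{equation}

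The difficulty is that in finite precision $\vec{V}^\dagger\vec{V}$ can be \emph{very} far from $\vec{I}$: the Lanczos vectors lose orthogonality, so $\langle\vec{v}_i|\vec{v}_j\rangle$ need not be small when $|i-j|$ is large, and one cannot simply discard the factor $\vec{V}^\dagger\vec{V}-\vec{I}$. Two ingredients resolve this. First, the Chebyshev product identities (the matrix analogues of \cref{eqn:cheb2n,eqn:cheb2n1}, used exactly as in \cref{alg:cheb-moments}) let us write, for $n\le 2k$, $\tilde{T}_n = 2\tilde{T}_j\tilde{T}_l - \tilde{T}_{|j-l|}$ with $j,l\le k$; this reduces \cref{eqn:kn_crux} to bilinear forms $\langle \tilde{T}_j(\vec{H}_k)\vec{e}_0 | (\vec{V}^\dagger\vec{V}-\vec{I}) | \tilde{T}_l(\vec{H}_k)\vec{e}_0\rangle$ with $j,l\le k$, together with lower-degree moments handled by recursion. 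Staying at degree $\le k$ is what makes \cref{thm:dk91} applicable to each factor, so that $\vec{V}\tilde{T}_j(\vec{H}_k)|\vec{e}_0\rangle$ and $\vec{V}\tilde{T}_l(\vec{H}_k)|\vec{e}_0\rangle$ are within $O(\epsilon_{\textup{mach}}\operatorname{poly}(k))$ of the vectors $\tilde{T}_j(\vec{H})|\vec{r}\rangle$, $\tilde{T}_l(\vec{H})|\vec{r}\rangle$, which have norm $\le 1 + O(\epsilon_{\textup{mach}}\operatorname{poly}(k))$ because the spectrum of $\vec{H}$ lies essentially in $[\Eminh,\Emaxh]$ and $|\tilde{T}_m|\le 1$ there.

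The second ingredient is structural information about $\vec{V}^\dagger\vec{V}$ drawn from Paige's finite precision analysis, which is what keeps these bilinear forms small despite the lost orthogonality. Left-multiplying the perturbed recurrence \cref{eqn:krylov_brecurrence_fp} by $\vec{V}^\dagger$ and using that $\vec{V}^\dagger\vec{H}\vec{V}$ is Hermitian shows that $\vec{V}^\dagger\vec{V}$ commutes with $\vec{H}_k$ up to a rank-two term supported on the last row and column plus a remainder of size $O(\epsilon_{\textup{mach}}\operatorname{poly}(k))$; combined with the local estimates $|1-\langle\vec{v}_{n+1}|\vec{v}_{n+1}\rangle|$, $|\langle\vec{v}_{n+1}|\vec{v}_n\rangle| = O(\epsilon_{\textup{mach}})$, this lets one propagate the factor $\vec{V}^\dagger\vec{V}-\vec{I}$ through the Chebyshev three-term recurrence: its loss-of-orthogonality content interacts with low-degree Chebyshev polynomials of $\vec{H}_k$ applied to $|\vec{e}_0\rangle$ only through those small boundary and local terms, and the accumulated error once again obeys a Chebyshev-like recurrence and therefore grows only polynomially in $k$ rather than exponentially. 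I expect this last step — carefully tracking how the boundary terms and the lost orthogonality contaminate the bilinear forms $\langle \tilde{T}_j(\vec{H}_k)\vec{e}_0 | (\vec{V}^\dagger\vec{V}-\vec{I}) | \tilde{T}_l(\vec{H}_k)\vec{e}_0\rangle$, and verifying polynomial growth of the resulting error recurrence — to be the main obstacle; it is the delicate heart of \cite{knizhnerman_96}, and is where the Paige-type finite precision theory does its real work.
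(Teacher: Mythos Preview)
The paper does not prove \cref{thm:kn96}; it is stated as an informal version of \cite[Theorem~1]{knizhnerman_96}, and the only justification offered is the one-line remark preceding the statement that ``with some additional knowledge of properties satisfied by $\vec{V}^\dagger\vec{V}$ in finite precision arithmetic (which can be very far from the identity), a similar result can be shown for the Chebyshev moments.'' There is thus no proof in the paper to compare against; your sketch goes well beyond anything the paper provides and is entirely consistent with the one hint the paper does give.

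As a sketch of Knizhnerman's actual argument, your outline is on the right track: the reduction via \cref{thm:dk91} to controlling $\langle\vec{e}_0|(\vec{V}^\dagger\vec{V}-\vec{I})\tilde{T}_n(\vec{H}_k)|\vec{e}_0\rangle$ is correct, and you correctly identify the Paige-type near-commutation of $\vec{V}^\dagger\vec{V}$ with $\vec{H}_k$ as the structural tool that makes the finite precision loss of orthogonality survivable. One minor imprecision: the Chebyshev product identity applied to $\tilde{T}_n(\vec{H}_k)$ inside \cref{eqn:kn_crux} does \emph{not} directly yield the symmetric bilinear form $\langle\tilde{T}_j(\vec{H}_k)\vec{e}_0|(\vec{V}^\dagger\vec{V}-\vec{I})|\tilde{T}_l(\vec{H}_k)\vec{e}_0\rangle$, because $\vec{V}^\dagger\vec{V}-\vec{I}$ sits on the left and does not commute with $\tilde{T}_j(\vec{H}_k)$. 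The cleaner route is to apply the product identity at the level of $\vec{H}$ and $\vec{H}_k$ separately (so that, e.g., both $\langle\vec{r}|\tilde{T}_{2m}(\vec{H})|\vec{r}\rangle$ and $\langle\vec{e}_0|\tilde{T}_{2m}(\vec{H}_k)|\vec{e}_0\rangle$ become differences of squared norms), and only then invoke \cref{thm:dk91}; this lands you on exactly the bilinear form you wrote. You already flag this step as the delicate one, so this is a clarification rather than a gap.
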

\Cref{thm:kn96} implies that the (appropriately scaled) Chebyshev moments can be obtained from the matrix $\vec{H}_k$, even when Lanczos was carried out in finite precision arithmetic.
Thus, the KPM with $\sigma_{\Eminh,\Emaxh}^{T}(E)$ can be implemented from Lanczos after $\vec{H}_k$ has been obtained, even in finite precision arithmetic.
As with \cref{thm:dk91}, \cite[Theorem 1]{knizhnerman_96} is much more precise than what is stated in \cref{thm:kn96}.

Note that any polynomial $p(x)$ of degree $2k$ can be decomposed in a Chebyshev series
\begin{equation}
    p(E) 
    = \sum_{n=0}^{2k} c_n \tilde{T}_n(E).
\end{equation}
Moreover, if $|p(E)|\leq M$ for all $E\in [\Eminh,\Emaxh]$, then
\begin{equation}
    |c_i|
    = \left| 
    (2-\delta_{0,n})\int \sigma_{a,b}^{T}(E) p(E) \tilde{T}_n(E) \d E \right|
    \leq 2 M.
\end{equation}
Applying the triangle inequality gives the bound
    \begin{equation}
        \big| \langle \vec{r} | p(\vec{H}) | \vec{r} \rangle
        - \langle \vec{e}_0 | p(\vec{H}_k) | \vec{e}_0 \rangle \big|
        = O\big(M \epsilon_{\textup{mach}} \operatorname{poly}(k)\big).
\end{equation}
In other words, \cref{thm:kn96} can be upgraded to hold for any bounded polynomial.
Thus, the KPM can also be implemented for choices of $\sigma(E)$ whose orthogonal polynomials are well-behaved.

We remark this that a similar argument also implies that, even in finite precision arithmetic, the FTLM approximation to $\langle \vec{r} | f(\vec{H}) | \vec{r} \rangle$ is accurate provided $f(E)$ has a good polynomial approximation; i.e. the same sort of result as would be expected in exact arithmetic.
A more detailed analysis is found in .\cite{knizhnerman_96}
This provides a theoretical justification to the observation that the FTLM works well, even without reorthogonalization .\cite{schnack_richter_steinigeweg_20}

\subsection{Computational costs}

The overall computational costs of our energy adaptive KPM and the standard KPM (assuming $[\Eminh,\Emaxh]$ is known) are almost identical.
In addition to the storage required for $\vec{H}$, both algorithms require storage for 3 vectors of length $d$.
At iteration $n$, the Lanczos algorithm (without reorthogonalization) and a standard implementation of the KPM require one matrix-vector product, several vector updates, and two inner products.
The algorithms also have additional lower-order arithmetic and storage costs which depend only on the maximum number of iterations $k$ but not the dimension $d$.
Assuming $k\ll d$, these costs are negligible.

As noted above, the standard KPM typically requires a pre-processing step in which a suitable interval $[\Eminh,\Emaxh]$ is determined, often via a few iterations of Lanczos.
While our Lanczos-based KPM avoids the need for such a pre-processing step, this pre-processing step is often cheap relative to the overall computation.
In such cases, the fact that our algorithm avoids this step is not particularly significant from a runtime perspective.

Finally, we note one situation in which the runtimes of the two algorithms may differ is on high-performance clusters where the time spent on communication for inner products and can dominant the time spent on arithmetic computation.
Indeed, in the case of the KPM, the two inner products are used to compute $\mu_n$ and $\mu_{n+1}$ and do not prevent the algorithm from proceeding.
On the other hand, the two inner products in  Lanczos  are used to compute $\alpha_n$ and $\beta_n$ are blocking, and therefore must be completed before the algorithm can proceed.

In such settings, if the cost of the pre-processing step is significant, one could run the energy-adaptive KPM suggested here for several iterations to determine good choices of $[\Eminh,\Emaxh]$, and $\sigma(E)$. 
Assuming the Lanczos basis vectors are stored, then $p_n(\vec{H})|\vec{r}\rangle$ and $p_{n-1}(\vec{H})|\vec{r}\rangle$ can be computed without any additional matrix vector products at which point an explicit three-term recurrence can be continued without the need for blocking inner products.
We leave such implementation details to practitioners who have better knowledge of their individual computing environments.

\section{Numerical Experiments}

In this section we provide several numerical examples to demonstrate the potential usefulness of our spectrum adaptive KPM.
Rather than focusing on any single domain area, we aim to provide a diverse collection of examples from a variety of applications.
Each of these examples demonstrates a particular aspect of the paradigm of decoupling computation from approximation which may prove useful in practical settings.
Unless stated otherwise, all KPM approximations are computed using our Lanczos-based approach.

\subsection{Spin systems}

One of the main uses of KPM and related algorithms is in the study of thermodynamic properties of Heisenberg spin systems. \cite{morita_tohyama_20,schnack_richter_steinigeweg_20}
Here we consider the simplest example: the 1D XX spin chain of length $m$ with Hamiltonian
\[
\vec{H} = 
J\sum_{i=1}^{m-1}\left( \vec{\sigma}_i^{\textup{x}} \vec{\sigma}_{i+1}^{\textup{x}} + \vec{\sigma}_i^{\textup{y}} \vec{\sigma}_{i+1}^{\textup{y}} \right)
+ h \sum_{i=1}^{m} \vec{\sigma}_i^{\textup{z}}.
\]
This system is exactly solvable, meaning that the true spectrum can be computed analytically. \cite{karabach_97}
For our numerical experiments, we set $m=20$ so that $d = 2^{20} \approx 10^6$ and use $J=1/6$ and $h=6$.

\Cref{fig:support} shows a histogram of the exact spectrum, along with the degree $s=500$ KPM approximation of the LDOS corresponding to a single random state $|\vec{r}\rangle$.
The support $[\Eminh,\Emaxh]$ of the KPM approximation is varied to study the impact of estimating $[\Emin,\Emax]$.
If $E_n \not\in [\Eminh,\Emaxh]$ for some index $n$, then the KPM approximation deteriorates, but if $[\Eminh,\Emaxh]$ is taken too large, then convergence is slowed. 
Thus, reasonably accurate estimates of $\Emin$ and $\Emax$ are required.
Our spectrum adaptive KPM allows these estimates to be determined after computation with $\vec{H}$ has occurred.

\begin{figure}[htb]
    \includegraphics[scale=.6]{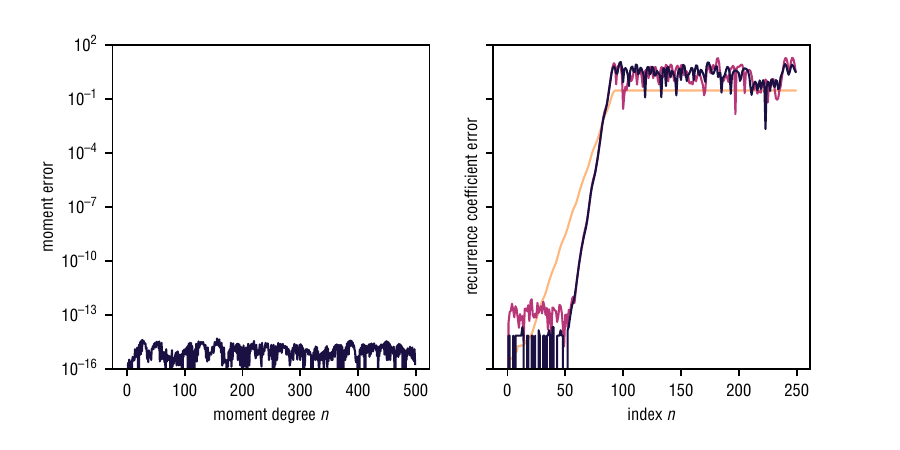}
    \caption{Study of relevant quantities in finite precision arithmetic.
    \emph{Legend}: 
    Left: error between Chebyshev moments computed directly and using the finite precision Lanczos recurrence and \cref{thm:poly_exact}.
    Right: 
    level of orthogonality of the Lanczos basis vectors $\max_{i<j\leq n}|\langle \vec{v}_i|\vec{v}_j\rangle|$
    ({\protect\raisebox{0mm}{\protect\includegraphics[scale=.7]{l0.pdf}}}) and
    error in Lanczos coefficients with and without reorthogonalization $|\alpha_n - \alpha_n^* |$ 
    ({\protect\raisebox{0mm}{\protect\includegraphics[scale=.7]{l1.pdf}}})
    and $|\beta_n - \beta_n^* |$
    ({\protect\raisebox{0mm}{\protect\includegraphics[scale=.7]{l2.pdf}}}).
    \emph{Takeaway}: Even though the Lanczos algorithm completely lost orthogonality, the Chebyshev moments are computed stably!
    This enables us to stably implement the Chebyshev KPM with Lanczos, even without reorthogonalzation.
    }
    \label{fig:moments_err}
\end{figure}

In \cref{fig:moments_err} we study the impact of finite precision arithmetic. 
We first consider how accurately the moments $\mu_n$ can be computed. 
Specifically, we compare our Lanczos-based algorithm and a standard implementation of the KPM and observe that the computed moments agree to essentially machine precision. 
This is expected due to \cref{thm:kn96}.
For reference, we also show the orthogonality of the Lanczos vector without reorthogonalization as well as the difference between the Lanczos recurrence coefficients $\{\alpha_n,\beta_n\}$ and what would have been obtained with reorthogonalization $\{\alpha_n^*,\beta_n^*\}$.
There is a complete loss of orthogonality and the coefficients obtained with and without reorthogonalization are vastly different. 
This implies that the moments agreeing is not simply due to the outputs with and without reorthogonalization being similar.

\subsection{Tight binding models}

Another common use of the KPM is in tight binding models.
Here we consider a cuboid section of a Zincblende crystal with nearest neighbor hopping.
The Hamiltonian and visualization of the zincblende crystal in \cref{fig:graphene} were generated using the Kwant code .\cite{groth_wimmer_akhmerov_waintal_14}
The resulting Hamiltonian is of dimension $d=56000$, and we output the average of 10 LDOSs corresopnding to random independent samples of $|\vec{r}\rangle$.

The DOS hs a large spike at zero, and this spike causes an undamaped Chebyshv KPM to exhibit massive Gibbs oscillations. 
These oscillations can be mitigated somewhat through the use of a damping kernel.
However, as seen in \cref{fig:graphene}, the resolution is only $O(1/k)$ and therefore $s$ must be taken large to get high resolution.

\begin{figure}[b]
    \includegraphics[scale=.6]{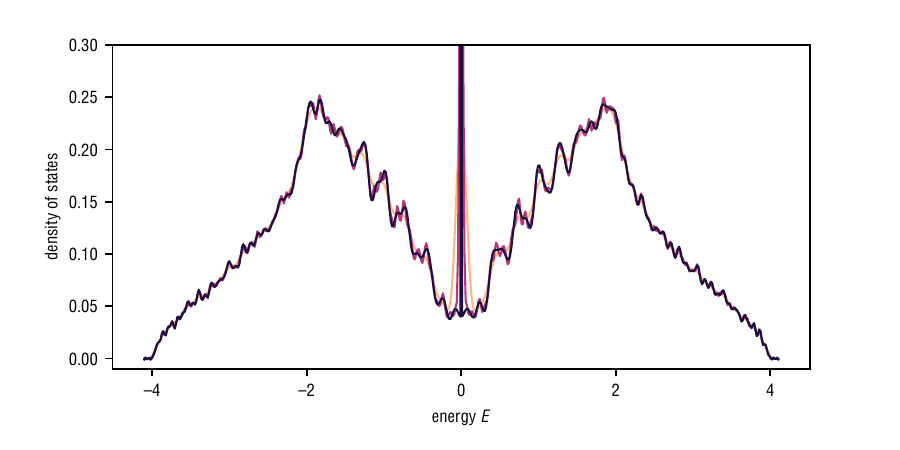}
    \caption{
    Approximation of DOS of a cubic Zincblende crystal with nearest neighbor hopping.
    \emph{Legend}: 
    reference density \cref{eqn:graphene_density} with $s=240$
    ({\protect\raisebox{0mm}{\protect\includegraphics[scale=.7]{l2.pdf}}}), 
    Chebyshev KPM with Jackson's damping with
    $s=240$
    ({\protect\raisebox{0mm}{\protect\includegraphics[scale=.7]{l0.pdf}}}) 
    and $s=800$
    ({\protect\raisebox{0mm}{\protect\includegraphics[scale=.7]{l1.pdf}}}).
    \emph{Takeaway}:
    Large spikes in the spectrum are hard to resolve with regular Chebyshev KPM, but can be resolved with a suitable choice of reference density $\sigma(E)$.
    Our spectrum adaptive KPM allows such an approximation to be computed without prior knowledge of the spectrum.    
    }
\label{fig:graphene}
\end{figure}

Instead, we can construct a reference density which is adaptive to this spike.
In particular, define
\begin{equation}
    \label{eqn:graphene_density}
    \sigma(E) = 0.05 \sigma_{1-\eta,1+\eta}^{T}(E) + 0.95\sigma_{\Eminh,\Emaxh}^{T}(E)
\end{equation}
where $\eta = 10^{-2}$.
Note that the relative weighting of the spike and the bulk spectrum, as well as the width of the spike can be tuned using our energy adaptive KPM. 
When properly chosen, this in higher resolution in other parts of the spectrum as the KPM approximation does not have to use as much of it's approximation power on approximating the spike in $\hat{\rho}(E)/\sigma(E)$.
Moreover, even without damping, Gibbs oscillations are relatively minor.

\subsection{Density Functional Theory}

In this example, we consider a matrix obtained in the study of $\text{Ga}_{41}\text{As}_{41}\text{H}_{72}$ with the pseudo-potential algorithm for real-space electronic structure calculation (PARSEC) .\cite{kronik_makmal_tiago_alemany_jain_huang_saad_chelikowsky_06}
The matrix \texttt{Ga41As41H72} can be obtained from the Sparse Matrix collection \cite{davis_hu_11} and has been used as a test matrix in past numerical work.\cite{zhou_li_11,li_xi_erlandson_saad_19}
This matrix is of dimension $d=268096$ and has many low-lying eigenvalues $|E_n|<100$ and a cluster of 123 large eigenvalues $E_n \in [1299,1301]$.
Thus, while the spectrum can be contained in two reasonably sized intervals, any single interval containing the spectrum must be large.
In this experiment, we output the average of 10 random LDOSs.

If the standard Chebyshev KPM is used, the zero part of $\hat{\rho}(E)/\sigma(E)$ in the gap must be approximated with a polynomial \footnote{If the number of outlying eigenvalues is known, it is possible to deflate them.\cite{weisse_wellein_alvermann_fehske_06,morita_tohyama_20} However, this requires additional computation including the storage of the eigenvectors (which may be intractable).}.
This significantly slows convergence, and even with $s=800$ moments the fine-grained structure of the upper spectrum is not resolved.
To avoid this delay of convergence, we can take $\sigma(E)$ as a density supported on two disjoint intervals $[\Eminh_1,\Emaxh_1]$ and $[\Eminh_2,\Emaxh_2]$ containing the spectrum of $\vec{H}$.
In particular, we take
\begin{equation}
\label{eqn:parsec_density}
    \sigma(E) = 0.95 \sigma_{\Eminh_1,\Emaxh_1}^{T}(E) + 0.05 \sigma_{\Eminh_2,\Emaxh_2}^{T}(E),
\end{equation}
where the intervals $[\Eminh_1,\Emaxh_1]$ and $[\Eminh_2,\Emaxh_2]$ are computed based on the eigenvalues of $\vec{H}_k$.
As seen in \cref{fig:parsec}, this provide higher resolution in each interval and the structure of the upper cluster is visible.
Here we have applied a simple convolutional filter to the KPM approximation of the upper eigenvalues approximation to reduce Gibbs oscillations.

\begin{figure}[htb]
    \includegraphics[scale=.6,trim=0 0 0 12,clip]{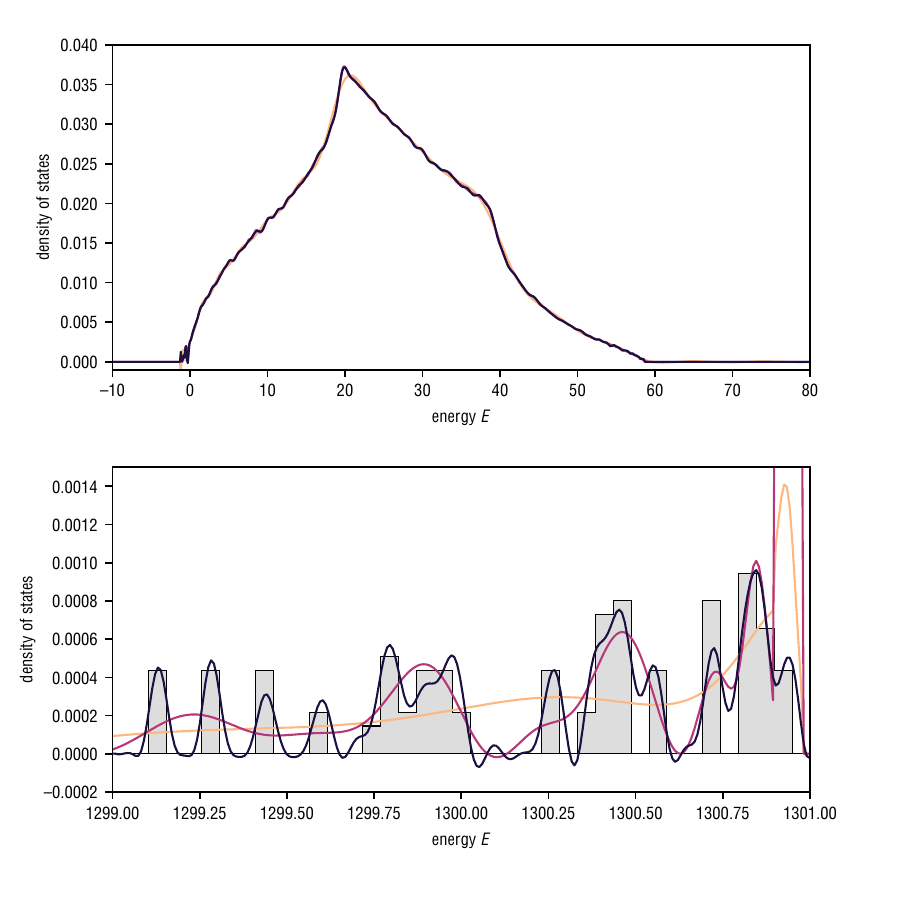}
    \caption{
    Approximation of DOS with a large gap in the spectrum.
    \emph{Legend}: 
    reference density \cref{eqn:parsec_density} with a convolutional filter and $s=200$
    ({\protect\raisebox{0mm}{\protect\includegraphics[scale=.7]{l2.pdf}}}), 
    Chebyshev KPM with Jackson's damping
    $s=200$
    ({\protect\raisebox{0mm}{\protect\includegraphics[scale=.7]{l0.pdf}}}) and
    $s=800$
    ({\protect\raisebox{0mm}{\protect\includegraphics[scale=.7]{l1.pdf}}}),
    histogram of top eigenvalues
    ({\protect\raisebox{0mm}{\protect\includegraphics[scale=.7]{hist.pdf}}}).
    \emph{Takeaway}:
    Even with 4 times the computation, the standard Chebyshev KPM does match the resolution of KPM with a more suitable choice of reference density $\sigma(E)$.
    Our spectrum adaptive KPM allows such an approximation to be computed without prior knowledge of the spectrum.
    }
    \label{fig:parsec}
\end{figure}

\section{Conclusion}

We have described an energy adaptive Kernel Polynomial Method based on the Lanczos algorithm. 
Our approach allows many different KPM approximations to be tested out for close to zero cost, after computation with $\vec{H}$ has finished.
Experiments demonstrate situations in which this allows the reference density $\sigma(E)$ to be chosen in such a way to improve the resolution of the approximation.
It is our belief that the paradigm of separating computation from the desired approximation is beneficial in most settings in which the KPM is used, and that our algorithm has the potential to improve the usability of the KPM.

\section*{Author  Declarations}

\subsection*{Conflict of interest}
\noindent
The authors have no conflicts to disclose.

\subsection*{Data availability}
\noindent
The data that support the findings of this study are available within the article.

\bibliography{refs.bib}

\end{document}